\newcommand{\mcal}{\mathcal}
\newcommand{\mbb}{\mathbb}
\newtheorem{thm}{Theorem}
\newtheorem{prop}[thm]{Proposition}
\newtheorem{rem}{Remark}
\newtheorem*{defin*}{Definition}
\algnewcommand\algorithmicforeach{\textbf{for each}}
\begin{document}

\title{Communication vs Distributed Computation:\\ an alternative trade-off curve} 
\author{
\IEEEauthorblockN{Yahya H. Ezzeldin, Mohammed Karmoose, Christina Fragouli}
 University of California, Los Angeles, CA 90095, USA,\\
Email: \{yahya.ezzeldin, mkarmoose, christina.fragouli\}@ucla.edu\\
}
\IEEEoverridecommandlockouts
\maketitle

\begin{abstract}
  In this paper, we revisit the communication vs. distributed computing trade-off, studied within the framework of MapReduce in \cite{li2016fundamental}. 
  An implicit assumption in the aforementioned work is that each server performs all possible computations on all the files stored in its memory. 
  Our starting observation is that, if servers can compute only the intermediate values they need, then storage constraints do not directly imply computation constraints. 
  We examine how this affects the communication-computation trade-off and suggest that the trade-off be studied with a predetermined storage constraint. 
  We then proceed to examine the case where servers need to perform computationally intensive tasks, and may not have sufficient time to perform all computations required by the scheme in~\cite{li2016fundamental}.
  Given a threshold that limits the computational load, we derive a lower bound on the associated communication load, and propose a heuristic scheme that achieves in some cases the lower bound. 

\end{abstract}

\section{Introduction}
Distributed computation across a set of wireless networked servers is well motivated for several practical constraints: we may want to speed up computation time so as to finish a computation faster; we may have partial view of the files needed for computation across servers; we may have limited memory in each server;  or we may be motivated by energy constraints. In this paper we consider the  distributed computing framework studied in \cite{li2016fundamental}, that follows the architecture of MapReduce~\cite{dean2008mapreduce}. 

Our starting observation is that, the system in \cite{li2016fundamental} does not explicitly separate computation from storage. 
The  system uses a cluster of $K$ servers to compute $Q$ output functions from $N$ input files. 
Each file is stored in $r$ different servers, balancing the amount of storage across servers. 
The work in \cite{li2016fundamental} calculates the trade-off 
between the amount of computation and communication that servers need to do for such file placement. 
However, an underlying assumption of the derived trade-off, is that each server performs all possible computations on all the files stored in its memory. 
It is natural to ask: is it indeed useful to perform all possible computations?

The following simple example illustrates that this is not always the case. 
 Consider a cluster with $K{=}3$ servers, $N{=}3$ files and $Q{=}3$ output functions. 
    All $3$ files are available at each server and each server is required to compute only one of the output functions. 
    In this case, instead of performing $9$ computations per server (as assumed in~\cite{li2016fundamental}), each server only needs to perform computations related to its dedicated output function, i.e., only 3 computations are needed per server. 

    Our first contribution is to generalize this observation and derive an alternative trade-off curve to the scheme in \cite{li2016fundamental}. 
    We explicitly use three parameters: $C_{total}$ the total amount of computation required; $r$ that captures the memory requirements; and the communication load $L$. 
    We consider the placement and communication scheme in \cite{li2016fundamental}, and calculate the minimum number of computations each server needs to perform. {We take into account} the amount computed by the server for its assigned output functions, the amount that need to be communicated to other servers, and the amount needed to use as side information to decode transmissions from other servers.

    We then proceed to  examine the case where servers need to perform computationally intensive tasks, and in particular, do not have sufficient time to perform all computations the curve in \cite{li2016fundamental} requires. Such a scenario may occur in wireless, where we may have cheap mobile devices with low computational power that need to cooperatively perform time-critical operations, for scientific computing or virtual reality applications. We ask, if the cluster is limited to perform an amount of computation below a  threshold, what is the resulting minimum communication required to achieve the function computation. 

  Our second contribution is to derive a lower bound for the communication-computation trade-off when a cluster has a limited computation budget. 
    For this lower bound, we assume that the files are distributed across the cluster with a predetermined level of redundancy that does not grow with the available computation budget.
    We show that a scheme directly inferred from~\cite{li2016fundamental} performs poorly when compared against the derived lower bound.
    Finally, we develop a distributed computing scheme inspired by~\cite{li2016fundamental} and show through numerical evaluation that the communication-computation trade-off it provides is comparable to the aforementioned lower bound.
    

\noindent {\bf Related Work.}
Minimizing communication load for distributed computation tasks has received considerable attention in the literature: starting from distributed boolean function computation between two parties \cite{yao1979some,orlitsky2001coding} to the more generalized theory of {\it communication complexity} \cite{kushilevitz2006communication,becker1998communication}. A key concept in reducing the needed amount of communication is through network coding. 
A prominent example of this concept is in the context of distributed cache networks~\cite{maddah2014fundamental,karamchandani2016hierarchical,hachem2015content}, where coding is used in either the data placement or data delivery phases to reduce the amount of communication in the delivery phase.
Recently, coding was also considered in the context of distributed computing systems that are based on the MapReduce framework \cite{li2016fundamental,li2016coded,yu2017optimally}. In fact, the authors in \cite{li2016fundamental} provided a Coded Distributed Computing (CDC) scheme which reduces the amount of communication needed in the data shuffling phase by using coded multicast transmissions.
Our work differs in that we separate computation and storage, and thus derive alternative trade-off curves depending on the relative values of these parameters.
%
%

\section{System Model}\label{sec:system}
\noindent{\bf Notation.} Calligraphic letters denote sets through out the paper. $|\mcal{A}|$ denotes the cardinality of the set $\mcal{A}$. 
The expression $[a:b]$ denotes the set of integers from $a$ to $b$.

\medskip


\noindent {\bf MapReduce framework.}
We consider a cluster of $K$ servers that computes $Q$ output functions  $\phi_q$, $q \in [1:Q]$, from $N$ input files $w_n$, $n \in [1:N]$. 
In this paper, we assume that the servers share a lossless broadcast domain: a transmission from a server can be losslesly received by all other servers.

We assume the cluster uses a MapReduce framework to compute the set of $Q$ functions in a distributed manner. MapReduce is based on the assumption that each output function can be calculated as a function of some intermediate processing of the files. In other words,
$    \phi_q(w_1,\dots,w_N{)}=  h_q(v_{q,1},\dots,v_{q,N})$,
where $v_{q,n} = g_{q,n}(w_n)$ is the intermediate value computed from file $w_n$ relevant to the output function~$\phi_q$, and has length $T$ bits.
In MapReduce terminology, the intermediate value is computed (or ``mapped'') using a map function $g_{q,n}$  and $h_q$ ``reduces'' the intermediate values $\{v_{q,n}\}_{n=1}^N$ to output $\phi_q$.

Based on this decomposition, the computation model in \cite{li2016fundamental} consists of three phases: \emph{Map}, \emph{Shuffle} and \emph{Reduce}. Additionally, a \emph{Placement} phase distributes files and tasks among the servers in the cluster. We next describe each of the phases:

\noindent 1) {\em Placement Phase:} Each server $k$ is loaded with a subset $\mcal{M}_k$ of the $N$ files, such that $\cup_k \mcal{M}_k = [1:N]$.
Each server $k$ is also assigned to compute a partition $\mcal{W}_k$ of the output functions, where $\cup_k \mcal{W}_k = [1:Q]$.

\noindent 2) {\em Map Phase:} Each server $k$ computes a subset $\mcal{C}_k$ of the intermediate values related to $\mcal{M}_k$, i.e., $\mcal{C}_k \subseteq \{v_{q,n} | q \in [1:Q], n \in \mcal{M}_k \}$.
At the end of the Map phase, the assigned computation subsets satisfy that $\cup_k \mcal{C}_k = \{v_{q,n} | q \in [1:Q], n \in [1:N]\}$.  
\begin{rem}
        In MapReduce, files are mapped by presenting them as $(key, value)$ pairs to a $map(\cdot)$ function that outputs a set of intermediate $(key,value)$ pairs based on the input pair. 
        Although, the same $map(\cdot)$ build is used across the servers, the function can output different sets intermediate values based on the server ID by including this information in the $key$.
\end{rem}
\noindent 3) {\em  Shuffle Phase:} For a server $k$ to compute a function $\phi_q$ where $q \in \mcal{W}_k$, it needs all the intermediate messages $\mcal{V}_{q} = \left\{v_{q,n} | q \in \mcal{W}_k, n \in [1{:}N] \right\}$. 
Thus in the Shuffle phase, the $K$ servers exchange intermediate values, such that each server has access to all its needed sets $\mcal{V}_q$. 
The shuffling scheme can be described as follows: each server $k$ creates a message $X_k$ that is a function of its locally computed intermediate values and broadcasts this message $X_k$ to the remaining $K-1$ nodes.

\noindent 4)  {\em Reduce Phase:} In the Reduce phase, server $k$ uses its locally computed intermediate values and the received transmissions $X_1,\dots,X_K$ to decode the set of the needed intermediate values $\mcal{V}_q$, $\forall q \in \mcal{W}_k$.  Using $\mcal{V}_q$, {the nodes} can now compute the desired {functions} $\phi_q = h_q(\mcal{V}_q)$, $\forall q \in {\mcal{W}_k}$. 

\medskip

\noindent{\bf Performance metrics.}
We measure the performance of this computation cluster across three parameters: the \emph{load redundancy} ($r$), the \emph{computation load} ($C_{total}$) and the \emph{communication load} ($L$), defined as follows:\\
\noindent$\bf \bullet$ {\em Load Redundancy.} {\rm We define the \emph{load redundancy} as the average number of times a file is assigned across the servers. We denote this by $r$,  i.e., $r \triangleq \frac{\sum_{k=1}^K |\mcal{M}_k|}{N}$.} Load redundancy captures memory constraints. \\ 
\noindent$\bf \bullet$ {\em Computation Load.} We define the \emph{computation load}  $C_{total}\triangleq\sum_k |\mcal{C}_k|$ as the total number of computations performed across servers in the cluster. \\ 
\noindent$\bf \bullet$ {\em Communication Load.} We define the \emph{communication load} $L \triangleq \sum_{k = 1}^K \frac{b(X_i)}{QNT}$, as the number of bits transmitted in the Shuffle phase normalized by $QNT$,
where $b(X_i)$ is the number of bits used to represent $X_i$ and $QNT$ is the total number of bits in all intermediate values $v_{q,n}$, for $q = [1:Q]$ and $n = [1:N]$.
From the definition, we have $0 \leq L \leq 1$.\\
The definitions of $L$ and $r$ follow  \cite{li2016fundamental}; however
in this paper, we explicitly separate the redundancy from the computation load, and use different parameters for each. 

\section{On the relation between redundancy and computation}
\label{sec:comp_vs_red}
An underlying assumption in  \cite{li2016fundamental} is that each server $k$ must compute all the intermediate value for its stored files $\mcal{M}_k$. In other words, $\mcal{C}_k = \{v_{q,n} | q \in [1:Q], n \in \mcal{M}_k \}$.  In this case, the load redundancy $r$ is linearly proportional to the total number of computations in the system as $|\mcal{C}_k| = |\mcal{M}_k| Q$ and $r$ can be therefore regarded as the computation load.
However, if the server can selectively choose which intermediate values of $\mcal{M}_k$ to compute in the Map phase (as long as the communication load is the same), then the total number of computations is not necessarily linearly correlated with $r$.

Consequently, an increase in $r$ does not necessarily result in an increase in the number of computations performed by the cluster. For example, assume that  
$Q=K$  and each server is required to compute 1 output function (without loss of generality, $\mcal{W}_k = \{k\}$). 
Then, we have $C_{total} = NQ$ for both $r=1$ and $r=K$. 
For $r=1$, each file is available at only one server, thus each server needs to compute all intermediate values for all files stored in its memory. 
For $r=K$, all files are available at each server. 
Thus, each server needs only to compute $N$ intermediate values related to its output function. 
In both cases, the optimal communication load $L(r) = \frac{K-r}{rK}$ is achieved~\cite[Theorem 1]{li2016fundamental}.
Note that for $r=K$, if the servers computed all intermediate values for their files, there would be $NQK$ computations instead of $NQ$.

Later in this section, we characterize the minimum computation load needed by the Coded Distributed Computing (CDC) scheme in~\cite{li2016fundamental} in order to achieve the optimal communication load $L^\star(r)$ in \cite[Theorem 1]{li2016fundamental} for $r = [1:K]$.
As we see later, taking this minimum computation load into account changes the trade-off in~\cite{li2016fundamental} for CDC.
As a preliminary to that discussion, we next briefly describe the CDC scheme in \cite{li2016fundamental}.

\medskip

\noindent {\bf An overview of the  CDC scheme.}\label{sec:CDC}
Assume that $N$ and $Q$ are sufficiently large so that $N = {K \choose r}\eta_1$ and $Q = K\eta_2$ for some $\eta_1,\eta_2 \in \mbb{N}$.
The CDC scheme operates as follows (see \cite{li2016fundamental} for a complete description):

\noindent 1) \emph{Placement Phase}: A disjoint subset $\mcal{M}_\mcal{T}$ of the files is assigned to each subset  $\mcal{T}$ of $r$ servers where $|\mcal{M}_{\mcal{T}}| = \eta_1$. Every server is thus assigned a set of $rN/K$ files and every $\eta_1$ partition of these files is shared with a unique set of $r-1$ other servers.
Every server $k$ is also assigned a unique subset $\mcal{W}_k$ of the output functions to calculate such that $|\mcal{W}_k| = \eta_2$ .\\
\noindent 2) \emph{Map Phase}: Every server computes all possible  intermediate function values for  the files it has.\\
\noindent 3) \emph{Shuffling Phase}: The shuffling phase repeats the following procedure for every set $\mcal{S} \subseteq [1:K]$ of size $r+1$:

{\noindent (i)} For every $i\!\in\!\mcal{S}$, define $\mcal{S}_i \!=\! \mcal{S}\backslash \{i\}$ and identify $\mcal{V}_{\mcal{S}_i}^{i}$ as
\begin{align}
    \mcal{V}_{\mcal{S}_i}^i \triangleq \{ v_{q,n} | n \in \cap_{j \in \mcal{S}_i} \mcal{M}_{j},\ q \in \mcal{W}_i \}.
    \label{eq:V_S_S1}
\end{align}
The set $\mcal{V}_{\mcal{S}_i}^i$ represents the intermediate values that are needed by server $i$ to compute functions in $\mcal{W}_i$, which can be computed {\it exclusively} by all servers in $\mcal{S}_i$ (recall that a file is replicated at exactly $r$ servers). 
Note that $|\mcal{V}_{\mcal{S}_i}^i| \!=\! \eta_1\eta_2$.

{\noindent (ii)} Split every intermediate value in $\mcal{V}_{\mcal{S}_i}^i$ into $r$ disjoint parts of $T/r$ bits and associate each part with a server in $\mcal{S}_i$. 
Thus we split the set $\mcal{V}_{\mcal{S}_i}^i$ into $r$ partitions denoted by $\mcal{V}_{\mcal{S}_i,j}^i$, $j \in \mcal{S}_i$,  each of size $\eta_1\eta_2 \frac{T}{r}$. Each server $j$ will be responsible to convey its part to server $i$ with coded broadcast transmissions.

{\noindent (iii)} After splitting all sets $\mcal{V}_{\mcal{S}_i}^i$ for all $i \in \mcal{S}$ (we have $r+1$ such sets), server $k$ sends the bit-wise XOR of all the $\eta_1\eta_2\frac{T}{r}$-bit parts in $\mcal{U}_k^\mcal{S} \triangleq \bigcup_{i \in \mcal{S}} \mcal{V}_{\mcal{S}_i,k}^i$, i.e., it makes $\eta_1\eta_2$ broadcast transmissions each of size $\frac{T}{r}$ bits. Each transmission is useful to all other $r$ nodes in $\mcal{S}$; moreover, each server in $\mcal{S}$ has the required side information to decode the part it needs.

\noindent 4) \emph{Reduce Phase}: In the reduce phase, every server uses its locally computed intermediate values and the decoded intermediate values in the shuffling phase to compute the $\eta_2$ output functions assigned to it in the initialization phase. 

Next we discuss the minimum computation load needed for the CDC scheme.
\medskip

\noindent {\bf Minimum Computations.}
The next proposition characterizes the minimum computation required by the CDC scheme.
\begin{prop}
    \label{prop:comp_CDC}
    For the placement scheme in \cite{li2016fundamental} with $r\!=\![1\!:\!K]$, the communication load $L^\star(r) = \frac{K-r}{rK}$ can be achieved with computation load
\begin{align}
    \mcal{C}_{\rm total} = \frac{rNQ(K-r+1)}{K}.
    \label{eq:total_comp_true}
\end{align}
\end{prop}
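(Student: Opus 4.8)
The plan is to count, for each server $k$, the minimum number of intermediate values it must compute in the Map phase so that the CDC shuffling scheme still functions correctly, and then sum over all $K$ servers. By symmetry of the placement, every server plays an identical role, so I would compute the per-server load $|\mcal{C}_k|$ and multiply by $K$. The key observation is that a server $k$ needs to compute an intermediate value $v_{q,n}$ with $n \in \mcal{M}_k$ only if that value is actually \emph{used}: either (a) $q \in \mcal{W}_k$, so server $k$ needs $v_{q,n}$ locally for its own Reduce step, or (b) server $k$ must transmit a coded combination containing a part of $v_{q,n}$ to some other server, or (c) server $k$ needs $v_{q,n}$ as side information to decode a transmission aimed at it. All other intermediate values among its stored files can be skipped, which is precisely why $\mcal{C}_{\rm total}$ can fall below the naive $|\mcal{M}_k|Q = \frac{rNQ}{K}$ per-server count assumed in~\cite{li2016fundamental}.

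The central step is to characterize which $(q,n)$ pairs fall into the union of categories (a)--(c). First I would argue that categories (b) and (c) together amount to the following: server $k$ must compute $v_{q,n}$ whenever file $n$ is stored at $k$ (i.e.\ $k \in \mcal{T}$ where $\mcal{M}_\mcal{T}$ is the subset containing $n$) and the output function index $q$ belongs to $\mcal{W}_i$ for some server $i$ that shares enough of the placement to participate in a common shuffling set $\mcal{S} \ni k, i$. Concretely, in the shuffle step for a set $\mcal{S}$ of size $r+1$, the value $v_{q,n}$ with $q \in \mcal{W}_i$ enters the transmission only when $n \in \cap_{j \in \mcal{S}_i}\mcal{M}_j$, which forces the $r$-subset $\mcal{S}_i$ storing $n$ to be exactly the set of servers holding $n$. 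Thus $v_{q,n}$ is relevant to server $k$ precisely when $q \in \mcal{W}_i$ for some $i$ \emph{not} holding $n$ but completing the $r$-set of holders of $n$ into a valid $\mcal{S}$. Counting these: for a fixed file $n$ stored at the $r$-subset $\mcal{T}$ (with $k \in \mcal{T}$), the eligible $i$ range over the $K-r$ servers outside $\mcal{T}$, and together with the case $q \in \mcal{W}_k$ itself, the relevant output-function indices number $(K-r+1)\eta_2$ out of the total $Q = K\eta_2$.

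Assembling the count, each server $k$ holds $\frac{rN}{K}$ files, and for each such file it must compute intermediate values for $(K-r+1)\eta_2$ output-function indices, giving
\begin{align}
    |\mcal{C}_k| = \frac{rN}{K}\,(K-r+1)\,\eta_2 = \frac{rN}{K}\cdot\frac{(K-r+1)Q}{K}.
    \label{eq:per_server}
\end{align}
Summing over the $K$ servers yields $\mcal{C}_{\rm total} = K \cdot \frac{rN(K-r+1)Q}{K^2} = \frac{rNQ(K-r+1)}{K}$, matching~\eqref{eq:total_comp_true}. To finish rigorously I would also verify the two boundary checks against the earlier examples: at $r=K$ this gives $\mcal{C}_{\rm total}=NQ$ (each server computes only its own output functions), and at $r=1$ it gives $\mcal{C}_{\rm total}=NQ$ as well (each file at a single server, all $Q$ functions computed once), both consistent with the discussion preceding the proposition.

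I expect the main obstacle to be the rigorous justification of the ``relevant set'' characterization in categories (b) and (c): one must confirm that the decoding at server $i$ genuinely requires server $k$ to hold \emph{all} the side-information parts corresponding to the other $v_{q',n'}$ in the same XORed transmission, and that no relevant intermediate value is inadvertently omitted by the selective computation. Showing necessity (that computing fewer values breaks either a transmission or a decode) as well as sufficiency (that this count suffices for the scheme to run) is where the care lies; the counting itself is then a routine binomial bookkeeping argument.
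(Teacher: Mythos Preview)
Your proposal is correct and lands on the same per-server count $|\mcal{C}_k|=\frac{rN(K-r+1)Q}{K^2}$ as the paper, via the same underlying observation that server $k$ needs $v_{q,n}$ (for $n\in\mcal{M}_k$) exactly when $q$ belongs to its own output set $\mcal{W}_k$ or to $\mcal{W}_i$ for some $i$ \emph{not} holding file $n$.

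Where you and the paper differ is purely in bookkeeping. The paper partitions server $i$'s computations as $|\mcal{C}_i|=|\mcal{C}_{\mcal{M}_i,\mcal{W}_i}|+|\mcal{C}_{TX_i}|$ and evaluates the shuffling term by summing over the ${K-1\choose r}$ subsets $\mcal{S}$ of size $r{+}1$ containing $i$, each contributing $r\eta_1\eta_2$; it then simplifies using ${K-1\choose r}/{K\choose r}=(K-r)/K$. You instead fix a stored file $n$ (with holder set $\mcal{T}\ni k$) and count the relevant output indices directly as $(K-r+1)\eta_2$, then multiply by the $\frac{rN}{K}$ files at server $k$. Your route is slightly more elementary, since it bypasses the binomial identity and makes the product structure of the count transparent; the paper's route stays closer to the mechanics of the CDC shuffling sets $\mcal{S}$, which makes the connection to ``transmissions plus side information'' more explicit. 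One small point to tighten when you write it up: state explicitly that the $(K{-}r)$ indices from $i\notin\mcal{T}$ are disjoint from $\mcal{W}_k$ (since $k\in\mcal{T}$), so the $(K{-}r{+}1)\eta_2$ count has no overlap.
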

\begin{proof}
We first note that every server $i$ locally computes all intermediate values required by the functions in $\mcal{W}_i$ and corresponding to the files in $\mcal{M}_i$; we denote these intermediate values as $\mcal{C}_{\mcal{M}_i,\mcal{W}_i}$. 
Thus, we have $\mcal{C}_{\mcal{M}_i, \mcal{W}_i} = \{v_{q,n} | q \in \mcal{W}_i, n \in \mcal{M}_i \} \subseteq \mcal{C}_i$.
Note that $|\mcal{C}_{\mcal{M}_i, \mcal{W}_i} |{=} |\mcal{W}_i||\mcal{M}_i| {=} \eta_2 \frac{rN}{K}$.
In addition to $\mcal{C}_{\mcal{M}_i, \mcal{W}_i}$, server $i$ also performs a set of computations required to carry out shuffling in the CDC scheme. 
We denote this set by  $\mcal{C}_{TX_i}$.
To calculate the number of computations in $\mcal{C}_{TX_i}$,
we distinguish between computations required by server $i$ to decode its needed intermediate values (from transmissions in the shuffling phase) and the computations needed to create its transmissions $X_i$ in the shuffling phase. 

Observe (from the description of the CDC scheme earlier and in \cite{li2016fundamental}) that in any $\mcal{S} \subseteq [1\!:\!K]$ of size $r+1$ where $i \in \mcal{S}$, server $i$ uses the sets $\{\mcal{V}_{\mcal{S}_k,i}^k, | k \in \mcal{S}\backslash\{i\}\}$ to construct its transmission.
In addition, since the remaining parts $\{\mcal{V}_{\mcal{S}_k,j}^k | k \in \mcal{S}\backslash\{i\}, j \in \mcal{S}\backslash\{i,k\}\}$ will be XOR-ed (at the other servers) with parts needed by server $i$, then server $i$ should compute the intermediate values $\cup_{k \in \mcal{S}\backslash\{i\}} \mcal{V}_{\mcal{S}_k}^k$ in order to decode its requested intermediate values as well as construct its transmissions in the shuffling phase.
This amounts to $\sum_{k \in \mcal{S},k\neq i} |\mcal{V}_{\mcal{S}_k}^k| = r \eta_1 \eta_2$  computations for every set $\mcal{S}$. 
Thus, the total number of computations by server $i$, $|\mcal{C}_i|$, is
\begin{align}
|\mcal{C}_i| &=  |\mcal{C}_{TX_i}| + |\mcal{C}_{\mcal{M}_i,\mcal{W}_i}|\nonumber\\
&\stackrel{(i)}= {K{-}1 \choose r} r\eta_1 \eta_2 + \eta_2 \frac{r N}{K}\nonumber \stackrel{(ii)}= r\eta_2 \left( {K{-}1 \choose r}\eta_1 + \frac{N}{K} \right) \nonumber \\
&\stackrel{(iii)}= \frac{rQ}{K} \left( {K{-}1 \choose r}\frac{N}{{K \choose r}} {+} \frac{N}{K} \right) = \frac{rNQ(K{-}r{+}1)}{K^2},
    \label{eq:Comp_CDC_min}
\end{align}
where: (i) follows since server $i$ appears in only ${K{-}1 \choose r}$ subsets of size $r+1$; (ii) and (iii) follow from the assumptions that $N={K \choose r}\eta_1$ and $Q = K\eta_2$.
From symmetry, the total number of computations  in the Map phase equals $\mcal{C}_{\rm total}=K|\mcal{C}_i|$.
\end{proof}
Note from \eqref{eq:total_comp_true} that $C_{total}$ is quadratic in $r$. Thus, we cannot view $r$ as a direct measure of computation load since both the communication load $L$ as well as the number of computations $C_{\rm total}$ reduce for $r \geq (K+1)/2$. 
Fig.~\ref{fig:comp_comparison} shows the relation in \eqref{eq:total_comp_true} for $N = 2520$ and $K = Q = 10$ versus the number of computations if a server compute all map functions for each of its stored files.
If we use~\cite[Theorem 1]{li2016fundamental} and Proposition~\ref{prop:comp_CDC} to couple $C_{total}$ and $L^\star$, then we get the trade-off shown in Fig.~\ref{fig:tradeoff_modified} for the CDC scheme, where the red line is a scaled version of the trade-off in \cite{li2016fundamental}. 
From Fig.~\ref{fig:tradeoff_modified}, it can be seen that if we are free to choose $r$ for a given $C_{total}$, then the optimal trade-off happens at $C_{total} = NQ = 25200$; by picking $r=K=10$. This gives a communication load equal to zero while achieving the minimum computation load.
This observation suggests that we can better understand the communication-computation trade-off, if we consider it with a predefined redundancy load ($r$) that does not change with the computation load $C_{total}$.
\begin{figure*}
\centering
 \begin{minipage}{0.32\textwidth}
  \centering
  \includegraphics[width=0.98\textwidth]{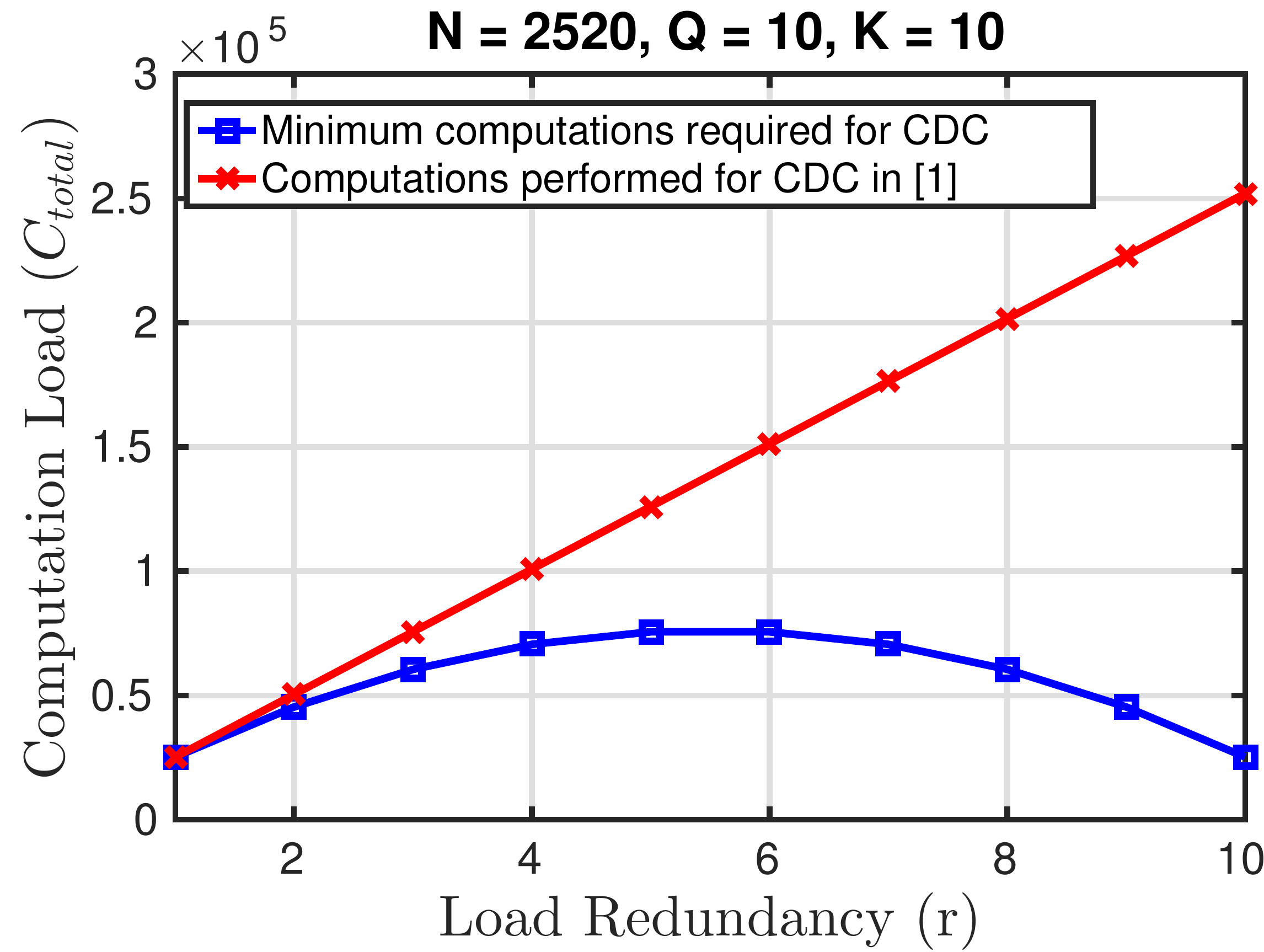}
    \caption{Computation Load vs. Load Redundancy.}
    \label{fig:comp_comparison}
 \end{minipage}
  \begin{minipage}{0.32\textwidth}
    \centering
    \includegraphics[width=1\textwidth]{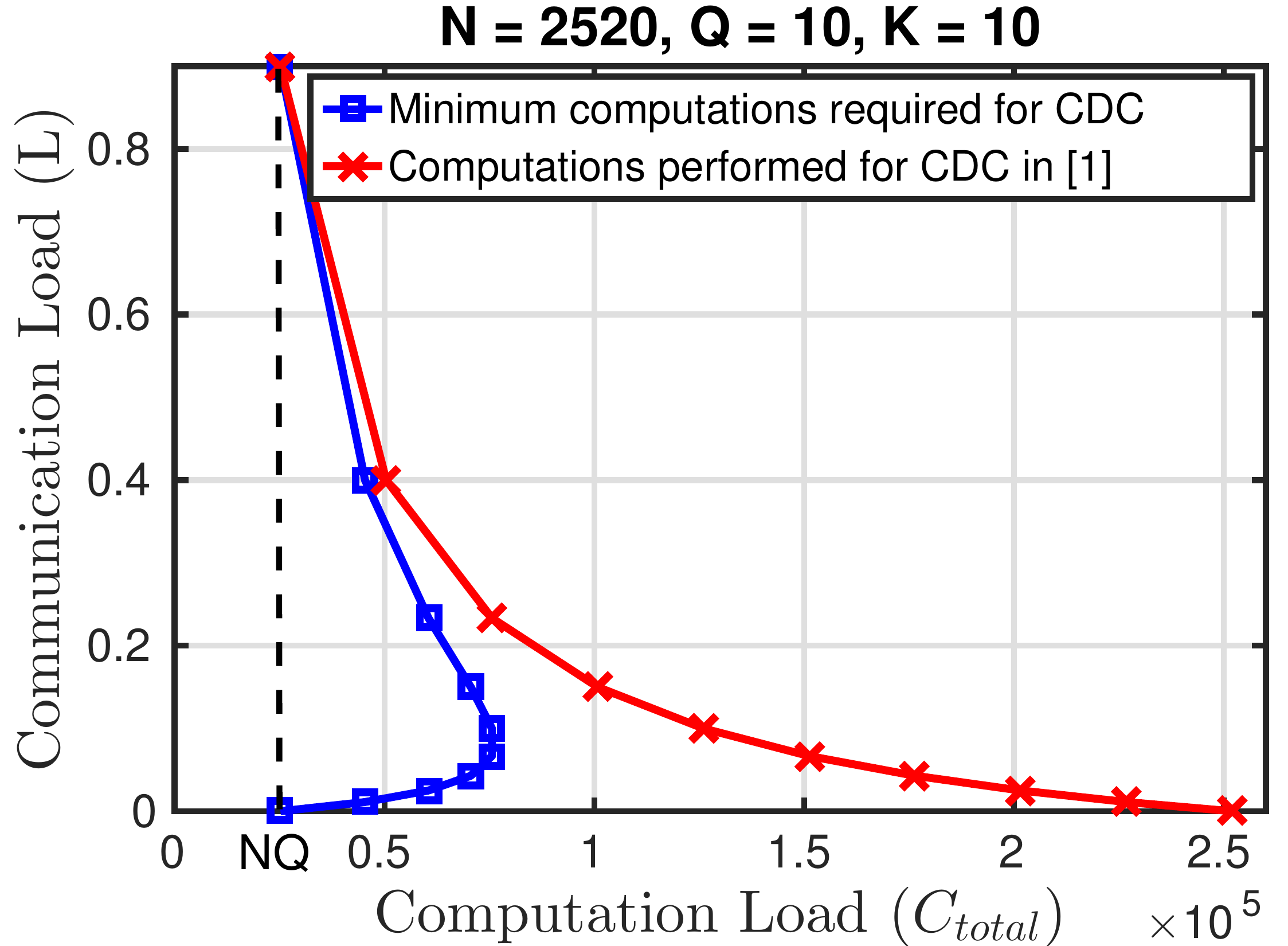}
    \caption{Communication Load vs. Computation Load.}
    \label{fig:tradeoff_modified}
 \end{minipage}
  \begin{minipage}{0.32\textwidth}
    \centering
    \includegraphics[width=1\textwidth]{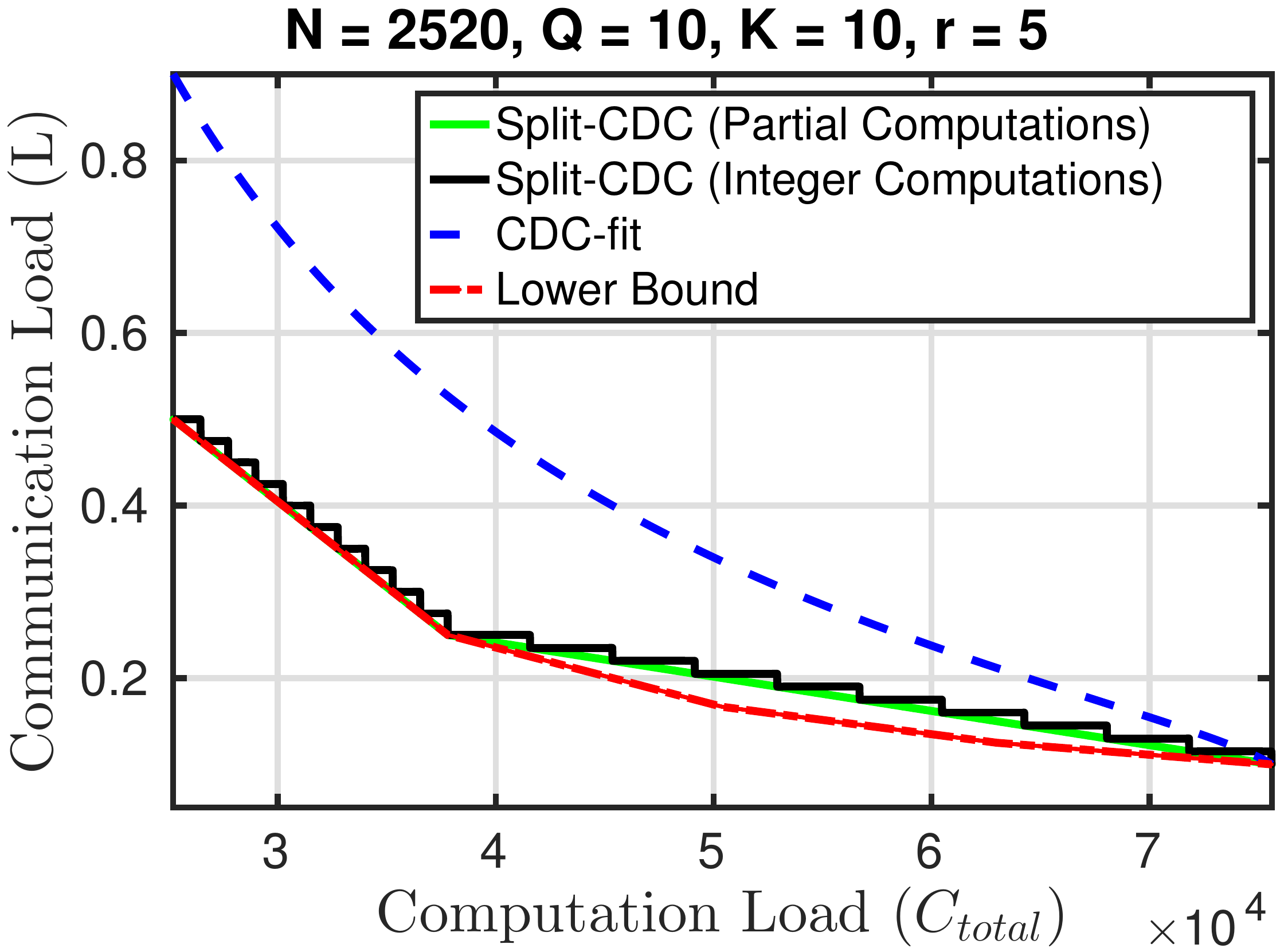}
    \caption{Communication Load vs. Computation Load for S-CDC.}
    \label{fig:LB_schemes}
 \end{minipage}
\vspace{-1em}
\end{figure*}

Thus, in the remainder of the paper, we consider $r$ as a parameter of the cluster (with $K, Q$ and $N$), and show how we can exploit this redundancy to perform coded distributed computing when at most $C_{total}$ computations are allowed.

\section{An Achievable Communication-Computation Trade-off}

Consider a distributed computing cluster with parameters $N,Q,K$ and load redundancy $r$, where $r$ represents {the number of times each} file is stored across the servers in the cluster. 
For our discussion in this section, we assume that $r \in [1:K]$ and that the file placement (for a given $r$) follows the strategy in~\cite{li2016fundamental}.
We are interested {in answering} the question: 
{If the cluster is allowed to perform at most $C_{total}$ computations, what is the minimum communication load $L(r,C_{total})$ needed in order to compute $Q$ output functions using the cluster ?}

If $C_{total} \!\geq\! r(K{-}r{+}1) NQ/K$, then from Proposition~\ref{prop:comp_CDC}, we can directly use the CDC scheme described in \cite{li2016fundamental}, to achieve the optimal communication load $L(r,C_{total}) = L^{\star}(r) = \frac{1}{r}\left(1-\frac{r}{K}\right)$. 
However, when $C_{total} <  r(K-r+1) NQ/K$, then the available computation budget is not enough to perform the shuffling and decoding required by the CDC scheme.
{In this case, can the CDC scheme be adapted to work with a restrictive computation budget?}
%
%
From~\cite{li2016fundamental}, we can infer a simple modification to the CDC scheme, which we refer to as {\it CDC-fit}. 
In this scheme, we use CDC on the cluster while operating it with a lower load redundancy $r$ that \emph{fits} the computation constraints.
In other words, we pick $r^\star {=} \max\{r' |C_{total} \!\geq\! r'{(K-r'+1)} NQ/K, r' \!\leq\! r\}$ and operate the cluster as if the files are only repeated $r^\star$ times. 
This ensures that there are enough computations to satisfy CDC for $r^\star$ and achieve the communication load $L(r^\star) = \frac{1}{r^\star}\left(1 - \frac{r^\star}{K}\right)$.
A natural question to ask here is whether this is the best possible approach?

To characterize this, we next develop a lower bound on the communication load when the cluster has a computation load $C_{total}$ and load redundancy $r$.

\smallskip

\noindent {\bf Lower Bound on Communication load.} 
We provide here a lower bound on the communication load for only a particular class of shuffling schemes. In this class, given a broadcast transmission sent during the shuffling phase, server $i$ can decode its required intermediate value from that transmission using {\it only} side information that it has {\it locally} computed.
i.e., it does not rely on future transmissions to provide it with enough linear combinations to decode its required intermediate values. In what follows, an \emph{$\ell$-type} transmission denotes a broadcast transmission made by a server during shuffling, which consists of the XOR of equally-sized parts of $\ell$ intermediate values. The weight of an $\ell$-type transmission is the size of the intermediate value parts used in the transmission.

    In order to relax our lower bound, we assume that a server can perform partial computations on the files, i.e., if a server wants to transmit a fraction of $fT$ bits (with $0 \leq f \leq 1$) of $v_{q,n}$ (recall $v_{q,n}$ is made of $T$ bits), then it only expends $f$ of a computation.
With this assumption, we can observe the following properties of our cluster: \\
\noindent\textbf{Obs. 1.} Each server has $rN/K$ files stored locally, and needs to receive $\frac{(K-r)N}{K} \cdot \frac{Q}{K}$ intermediate values through shuffling.\\
\noindent\textbf{Obs. 2.} For a cluster with load redundancy $r$, all feasible transmission have $\ell \!\leq\! r$. This follows by noting that an $\ell$-type transmission is assumed to satisfy $\ell$ servers~\footnote{If it is only useful for less than $\ell$ servers then the transmitter could have XOR-ed less intermediate values to generate the transmission.}. 
Therefore, each intermediate value involved in this transmission is computed once at the transmitter, and computed once at each of the other $\ell {-}1$ servers which would utilize this intermediate value as side information to decode the transmission. Since each file is repeated across $r$ servers, then $\ell \leq r$. \\
\noindent\textbf{Obs. 3.} In the shuffling phase, each $\ell$-type transmission and weight $fT$ incurs an added computation cost to the cluster equal to $\ell^2 fT$. To see this, note that the server sending this transmission makes $\ell fT$ computations. Moreover, an $\ell$-type transmission serves $\ell$ servers, each of which would have to do $(\ell -1)fT$ computations to acquire the needed side information. Therefore we get $\ell fT + \ell(\ell - 1)fT = \ell^2 fT$.
%

\smallskip

Let $z_\ell$ be the number of $\ell$-type transmissions.
Then, the communication load for a shuffling scheme is lower bounded by the solution of the following Linear Program (LP)
\begin{align}
&L_{lb}(C_{total})  = \min_{\{z_1,\dots,z_r\}} \sum_{\ell}^r \frac{z_\ell}{NQ} \label{eq:lowerbound_comm} \\
& s.t. \:\: \sum_{\ell=1}^r z_\ell\ \ell \geq \frac{(K-r)NQ}{K}, \:\:\: \sum_{\ell=1}^r z_\ell\ \ell^2 + \frac{rNQ}{K} \leq C_{total} \nonumber \\
 & \quad \:\:\:\:  z_i \geq 0,\quad \forall i \in [1:r], \nonumber
\end{align}
\noindent where: (i) the first constraint is a necessary condition for the shuffling phase to deliver $\frac{(K-r)QN}{K^2}$ intermediate values to each server in the cluster; 
(ii) the second condition is a necessary condition for the total computation (local computations and shuffling computations) to not exceed $C_{total}$.
Note that the result of the LP is a lower bound to the communication load 
since the first constraint is not sufficient to ensure that each server receives its needed intermediate values.
Fig~\ref{fig:LB_schemes} compares the communication-computation trade-off for the  aforementioned \emph{CDC-fit} scheme with the lower bound in \eqref{eq:lowerbound_comm}. The two trade-offs are close only towards high computation loads which allows the system to operate with an $r^\star$ close to the natural $r$ of the cluster.
Next, we propose a modification to the CDC scheme denoted as \emph{Split-CDC} (S-CDC) that provide a communication-computation trade-off close to the trade-off suggested by the lower bound in \eqref{eq:lowerbound_comm}. 

\smallskip

\noindent\textbf{Split-CDC (S-CDC).}
In order to introduce S-CDC, we make the following observations on the shuffling strategy in CDC.\\
\noindent\textbf{Obs. 1.} The set $\mcal{V}_{\mcal{S}_i}^i$ described in \eqref{eq:V_S_S1} is of size $|\mcal{V}_{\mcal{S}_i}^i| = \eta_1 \eta_2$. \\ 
\noindent\textbf{Obs. 2.} For every subset $\mcal{S}$ of $r+1$ servers, the computations needed to satisfy all servers in $\mcal{S}$ is $r(r{+}1)\eta_1 \eta_2$ and the number of packets communicated among them is $\frac{r+1}{r}\eta_1 \eta_2$.\\
\noindent\textbf{Obs. 3.} From \eqref{eq:V_S_S1}, it is not hard to see that for any subset $\mcal{S}' \subseteq \mcal{S}$ such that $|\mcal{S}'| > 1$, $\mcal{V}_{\mcal{S}_i}^i \subseteq \mcal{V}_{\mcal{S}'_i}^i, \forall i \in \mcal{S}'$. 

The previous observations suggest the following modification to the CDC scheme. Each subset $\mcal{S}$ of size $r+1$ can be {\it split} into disjoint subsets of smaller size. Each smaller subset $\mcal{S}'$ can still be used to satisfy its members with the set $\mcal{V}_{\mcal{S}_i}^i$ as per Observation $3$.
Therefore, by using subsets $\{\mcal{S}'\}$ of size different than $r+1$, this would allow the scheme to exhibit different levels of communications and computations per $\mcal{S}$ (based on the size of the splits), as evident from Observation $2$. 
The possible sizes of $\mcal{S}'$ are $r' \in [1:r]$, which we refer to as the {\it split size}. 
For $r'$, define $j_{r'} = \lfloor\frac{r+1}{r'+1}\rfloor$ and $r'' \!=\! (r\!+\!1) - j(r'\!+\!1)-1$. 
Thus we can split set $\mcal{S}$ into $j_{r'}$ disjoint sets $\mcal{S}^{(r')}$ of size $(r'+1)$ and one set $\mcal{S}^{(r'')}$ of size $(r''+1)$. 
For each set in $\mcal{S}^{(r')}$, the needed number of computations is $r'(r'+1)\eta_1\eta_2$ and the needed number of communicated packets is $\frac{r'+1}{r'}\eta_1\eta_2$. 
If $\mcal{S}^{(r'')}$ is not empty, then similar expression follow (except when $|\mcal{S}^{(r'')}| = 1$, where we need $\eta_1\eta_2$ computations and $\eta_1\eta_2$ packets exchanges to send the intermediate values through unicast transmissions from any server in $\mcal{S}^{(r')}$).
Finally, since $|\mcal{V}_{\mcal{S}_i}^i| = \eta_1 \eta_2$, for every subset $\mcal{S}$ of size $r+1$, CDC would naturally incur $\eta_1 \eta_2$ transmission rounds, each delivering exactly one intermediate value in $\mcal{V}_{\mcal{S}_i}^i$ for all servers in $\mcal{S}$.
Thus, our observations suggest that CDC can operate each of these transmission rounds with a different splitting size $r'$ of $\mcal{S}$;
thus the name {\it Split-CDC (S-CDC)}.
For a transmission round using split size $r'$, the total computations and communications per subset $\mcal{S}$ of size $r{+}1$ is
%
\begin{align*}
  Comp(r') &=
  \begin{cases}
      j_{r'} r'(r'+1) + r''(r''+1), & r'' \neq 0,\\
  j_{r'} r'(r'+1) + 1, & r'' = 0,
  \end{cases}\\
 Comm(r') &=
  \begin{cases}
  j_{r'}\frac{r'+1}{r'} + \frac{r''+1}{r''}, & r'' \neq 0,\\
  j_{r'}\frac{r'+1}{r'} + 1, & r'' = 0.
  \end{cases}
\end{align*}
    S-CDC can now be formally described. Let $\frac{z_{r'}}{\eta_1\eta_2}$ be the fraction of the intermediate values in $\mcal{V}_{\mcal{S}_i}^i$ per subset $\mcal{S}$ that is delivered using split size $r'$. Then, S-CDC works as follows:

\noindent$\bf 1)$ Determine the optimal values of $\frac{z_{r'}}{\eta_1\eta_2}$ for $r' \in [1:r]$ - this is done via solving the LP in \eqref{eq:p-cdc-scheme-comm}.\\
\noindent$\bf 2)$ For each $\mcal{S} \subseteq [1{:}K]$ of size $r{+}1$ and split size $r' \in [1{:}r]$:
\begin{itemize}
 \item Split set $\mcal{S}$ into $j_{r'}$ disjoint sets $\mcal{S}^{(r')}$ of size $(r'+1)$ and one set $\mcal{S}^{(r'')}$ of size $(r''+1)$.
 \item Use enough computations and communications per each of the subsets $\mcal{S}^{(r')}$ and $\mcal{S}^{(r'')}$ as per the CDC scheme, to deliver $z_{r'}$ intermediate values to all servers in $\mcal{S}$. The computations and communications needed to do so is equal to $z_{r'}Comp(r')$ and $z_{r'} Comm(r')$ respectively.
\end{itemize}
What remains is to find the optimal values of $z_{r'}$. We do so via solving the following LP, which minimizes the total communication load subject to a total computation constraint.
\begin{align}
    \label{eq:p-cdc-scheme-comm}
    &L_{P}(C_{total})\quad  =   {\min_{\{z_1,\dots,z_r\}}} {K \choose r+1}\sum_{\ell=1}^r \frac{z_{\ell} Comm(\ell)}{NQ} \nonumber\\
    & \quad \quad \: s.t.\ \ \sum_{\ell=1}^r z_\ell = \eta_1 \eta_2, \qquad z_i \geq 0,\quad \forall i \in [1:r], \nonumber \\
    & \quad \quad \: {K \choose r+1}\sum_{\ell=1}^r z_\ell Comp(\ell) + \frac{rNQ}{K} \leq C_{total}.
\end{align}

Note that in \eqref{eq:p-cdc-scheme-comm}, the variables $z_{\ell}$ are allowed to take non-integer values which means that we are allowing the servers to do partial computations of the intermediate values if that is what they will need to transmit or decode. 
To restrict partial computations, we can approximate the solution of \eqref{eq:p-cdc-scheme-comm} to get a suboptimal integer-valued solution $\hat{z}^\star_\ell$.
    Note that if an optimal solution of \eqref{eq:p-cdc-scheme-comm} is non-integer, then there exists only two non-zero elements of $\{z^\star_\ell\}$; 
        we denote these two elements as $z_{\ell_1}$ and $z_{\ell_2}$ where $\ell_1 < \ell_2$.
        Then for our approximate solution, we define $\hat{z}^\star_{\ell_2} = \lfloor {z}^\star_{\ell_2} \rfloor$ and $\hat{z}^\star_{\ell_1} = \eta_1\eta_2 - \lfloor{z}^\star_{\ell_2}\rfloor$.
        This gives us a communication load $\hat{L}_P(C_{total})= {K \choose r+1}\sum_{\ell=1}^r \frac{\hat{z}^\star_{\ell} Comm(\ell)}{NQ}$.

        Fig.~\ref{fig:LB_schemes} compares the performance of S-CDC with the lower bound in \eqref{eq:lowerbound_comm} for $N\!=\!2520, Q \!=\! K \!=\! 10$ and $r~=~5$ when partial computations are allowed. 
        In this particular setup, Fig.~\ref{fig:LB_schemes} shows that by preventing partial computations, we only incur a small fraction of the communication load as an expense.

\bibliographystyle{IEEEtran}
\bibliography{Distributed}
\end{document}